\newif\ifdraft \draftfalse 
\definecolor{darkblue}{rgb}{0,0,.5}
\newcommand{\R}{\mathbb{R}}
\newcommand{\N}{\mathbb{N}}
\newtheorem{theorem}{Theorem}
\newtheorem{definition}{Definition}[section]
\title{Private Count Release: A Simple and Scalable Approach for Private Data Analytics}
\author{Ryan Rogers}
\begin{document}

\maketitle

\begin{abstract}
We present a data analytics system that ensures accurate counts can be released with differential privacy and minimal onboarding effort while showing instances that outperform other approaches that require more onboarding effort.  The primary difference between our proposal and existing approaches is that it does not rely on user contribution bounds over distinct elements, i.e. $\ell_0$-sensitivity bounds, which can significantly bias counts.  Contribution bounds for $\ell_0$-sensitivity have been considered as necessary to ensure differential privacy, but we show that this is actually not necessary and can lead to releasing more results that are more accurate.  We require minimal hyperparameter tuning and demonstrate results on several publicly available dataset. We hope that this approach will help differential privacy scale to many different data analytics applications.  
\end{abstract}

\section{Introduction}

There have been several practical deployments of differential privacy for data analytics tasks, including open source libraries and systems from Tumult Labs \cite{Tumult22},\footnote{\url{https://www.tmlt.io/}} OpenDP,\footnote{\url{https://opendp.org/}} and Google.\footnote{\url{https://github.com/google/differential-privacy}}  Notable practical deployments of differential privacy have been announced from Apple \cite{ApplePrivacy17}, the U.S. Census \cite{Census22}, Google \cite{GoogleMobility}, LinkedIn \cite{RogersLinkedIn21}, Meta \cite{FacebookDP}, Microsoft \cite{DingKuYe17}, and Wikipedia \cite{Wikipedia23}.  Although there is a common suite of differentially private algorithms across these deployments, there is a significant onboarding process to include these algorithms, typically tuning many parameters, e.g. sensitivity, that would not be present without differential privacy, and determining where these algorithms should be included in an existing data flow.  \citet{Wikipedia23} describes a standardized workflow for differentially private data releases, which includes a \emph{tuning} stage where parameters are tuned and accuracy is evaluated.  It is this stage that is hard to automate and typically requires experts in the area to determine the correct parameters.  Existing differentially private systems allow their systems to \emph{scale} to very large datasets, which can contain many columns and many rows, which need not be stored in memory.  However, this framework cannot scale to many different of use cases, as each requires some expert as part of the onboarding process.  In this work, we describe a simple approach that allows us to onboard to multiple use cases for various different scenarios.  This system does not allow for arbitrary types of queries, but does handle common tasks based on counts.  In particular, we consider releasing as many counts as possible from a dataset subject to a relative accuracy guarantee for each count and an overall differential privacy guarantee.  This common task has proven to be very difficult for differential privacy, but with recent advances, we are able to handle this task fairly easily.  

As a direct comparison, we consider the ``naive" approach described in the Plume differential privacy system from Google \cite{AminGiJoKuVa22}, which achieves the best accuracy in their work while sacrificing efficiency compared to their proposed solution.  They describe an approach that involves contribution bounding, calling it a necessary step.  We want to distinguish that there are two types of contributions that must be bound: (1) the $\ell_0$-sensitivity or cross-partition contribution bound, meaning how many distinct counts a user can modify in the dataset; (2) the $\ell_\infty$-sensitivity, or per-partition contribution bound, meaning how much a user can modify a single count.  Their approach consists of the following steps:
\begin{itemize}
\item Bound the number of contributions $m$ that any user in the dataset $x$ can contribute, call the resulting dataset $x_m$.
\item Perform a differentially private algorithm to discover the domain $D$ of elements in the dataset $x_m$.  There are several algorithms to do this including \cite{KorolovaKeMiNt09, GopiGuKuShShYe20, SwanbergDeHa23}, all of which require $m$ and $x_{m}$ as well as privacy parameters.
\item Restrict the original dataset $x$ to include only items that were discovered in $D$.  Call this dataset $x^D$.
\item Bound the number of contributions $m'$ that any user in $x^D$ can contribution, call the resulting dataset $x^D_{m'}$. Note that $m$ and $m'$ need not be equal,  
\item Apply a noise addition mechanism, e.g. Gaussian mechanism, to each item in $x^D_{m'}$.  
\end{itemize}

This contribution bounding has two negative side effects: (1) it introduces new parameters $m, m'$ that requires tuning to balance bias and variance of noise beyond the overall privacy parameters, (2) it can significantly slow down data flows.  The Plume system, presents a system in the MapReduce framework that can help with the impact to scalability, but not the parameter tuning.  Perhaps hidden in each differential privacy deployment is the privacy that is consumed in making decisions about these hyperparameters.\footnote{Tumult Labs \url{https://www.tmlt.io/solutions} makes this explicit in their documentation in describing the ``Deploy" phase, which calls out that there are ``unsafe summaries" before the ``privacy engine" on a sensitive dataset.}  Ideally, we would have a differential privacy system that requires no tuning at all, modulo that the overall privacy budget is decided upon before even looking at the data.  Note that the $\ell_\infty$-sensitivity bound can also be enforced when releasing the counts with a noise addition mechanism in the last step described above.  It is useful to know the number of distinct users that have particular items, so we will assume the $\ell_\infty$-sensitivity is 1 in this work, but can easily be extended to include the $\ell_\infty$-sensitivity contribution bound.  This work will describe one such approach that can be a way to implement differential privacy for certain, albeit common, data analytics tasks.  In particular, our goal is to return as many results from a dataset with some target relative error, subject to differential privacy.  Similar success metrics were presented in the case study of releasing page view demographics from Wikimedia using Tumult Labs, quoting that ``more than 95\% of counts had a relative error below 50\%."\footnote{\url{https://www.tmlt.io/resources/publishing-wikipedia-usage-data-with-strong-privacy-guarantees}}

\section{Preliminaries}

We now define approximate differential privacy which depends on neighboring datasets $x, x' \in \cX$, denoted as $x\sim x'$ that differ in the presence or absence of one user's records.

\begin{definition}[\citet{DworkMcNiSm06, DworkKeMcMiNa06}]
An algorithm  $A : \cX \rightarrow \cY$ is $(\epsilon, \delta)$-differentially private if, for any measurable set $S \subseteq \cY$ and any neighboring inputs $x \sim x'$, 
\begin{equation}
\label{eq:dp}
\Pr[A(x) \in S] \leq e^\epsilon \Pr[A(x') \in S] + \delta.
\end{equation}
If $\delta = 0$, we say $A$ is $\diffp$-DP or simply \emph{pure DP}.  
\end{definition}

It will be important to define the sensitivity of a statistic, which is precisely what contribution bounding is attempting to bound.  The $\ell_p$-sensitivity of a statistic $f:\cX \to \R^d$ that takes a dataset $x \in \cX$ to a real vector in $\R^d$ as the following where the max is taken over neighboring $x, x'\in \cX$
\[
\Delta_p(f) = \max_{x \sim x'} \left\{ ||f(x) - f(x')||_p \right\}.
\]

We now define approximate concentrated differential privacy (CDP),\footnote{Although \cite{BunSt16} defines zCDP, to differentiate between CDP from \cite{DworkRo16}, we will use CDP to be the version from \cite{BunSt16}}.  Similar to approximate DP, it permits small probability events of unbounded R\'enyi divergence. 

\begin{definition}[\citet{BunSt16, PapernotSt22}]\label{def:approxazcdp}
Suppose $A: \cX \to \cY$ and $\rho, \delta\geq 0$. We say the algorithm $A$ is $\delta$-approximate  $\rho$-CDP if, for any neighboring datasets $x, x'$, there exist distributions $P', P'', Q', Q''$ such that the outputs are distributed according to the following mixture distributions:
\begin{align*}
 A(x) \sim (1 - \delta)P' + \delta P''  \qquad A(x') \sim (1 - \delta)Q' + \delta Q'',
\end{align*}
where for all $\lambda\geq 1$, $D_\lambda(P' \| Q') \leq \rho \lambda$ and $D_\lambda(Q' \| P') \leq \rho \lambda$.
\end{definition}

We can also convert approximate differential privacy to approximate CDP and vice versa.
\begin{theorem}[\citet{BunSt16}]
\label{thm:CDPtoDP}
If $A$ is $(\diffp, \delta)$-DP then it is $\delta$-approximate $\diffp^2/2$-CDP.  If $A$ is $\delta$-approximate $\rho$-CDP then it is $(\rho + 2\sqrt{\rho \log(1/\delta')}, \delta' + \delta)$-DP for any $\delta'>0$.  
\end{theorem}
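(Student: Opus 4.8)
The plan is to prove the two implications separately; the conversion from approximate CDP to approximate DP is the main workhorse, while the reverse direction follows from a bounded-privacy-loss argument. \emph{From CDP to DP:} given that $A$ is $\delta$-approximate $\rho$-CDP, Definition~\ref{def:approxazcdp} supplies, for each pair of neighbors, distributions with $A(x)\sim(1-\delta)P'+\delta P''$ and $A(x')\sim(1-\delta)Q'+\delta Q''$ where $D_\lambda(P'\|Q')\le\rho\lambda$ and $D_\lambda(Q'\|P')\le\rho\lambda$ for every $\lambda\ge1$. I would first reduce to the pure parts $P',Q'$ and control the privacy loss $Z=\log\frac{dP'}{dQ'}(Y)$ for $Y\sim P'$. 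The identity $\mathbb{E}_{Y\sim P'}[e^{(\lambda-1)Z}]=e^{(\lambda-1)D_\lambda(P'\|Q')}$ combined with the hypothesis feeds a Chernoff bound: for every $\lambda\ge1$, $\Pr[Z>\diffp]\le e^{-(\lambda-1)\diffp}e^{(\lambda-1)\rho\lambda}=e^{(\lambda-1)(\rho\lambda-\diffp)}$. Optimizing the exponent over $\lambda$ gives the minimizer $\lambda^\star=(\diffp+\rho)/(2\rho)$, which satisfies $\lambda^\star\ge1$ precisely when $\diffp\ge\rho$, so that $\Pr[Z>\diffp]\le e^{-(\diffp-\rho)^2/(4\rho)}$; choosing $\diffp=\rho+2\sqrt{\rho\log(1/\delta')}$ makes the right-hand side equal to $\delta'$, and the same bound holds symmetrically for $\log\frac{dQ'}{dP'}$.

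Next I would turn this tail bound into a DP guarantee. Letting $B=\{y:\log\frac{dP'}{dQ'}(y)>\diffp\}$, for any event $S$ one has $P'(S)=P'(S\cap B)+P'(S\cap B^c)\le\delta'+e^{\diffp}Q'(S)$, since $dP'\le e^{\diffp}dQ'$ off $B$; thus $(P',Q')$ is $(\diffp,\delta')$-DP. Folding the $\delta$-mass terms back in, $\Pr[A(x)\in S]=(1-\delta)P'(S)+\delta P''(S)\le(1-\delta)(e^\diffp Q'(S)+\delta')+\delta\le e^\diffp\Pr[A(x')\in S]+\delta'+\delta$, which is exactly the claimed $(\rho+2\sqrt{\rho\log(1/\delta')},\,\delta'+\delta)$-DP.

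\emph{From DP to CDP:} for the first implication I would invoke the standard decomposition characterization of approximate DP, namely that $(\diffp,\delta)$-DP yields the same mixture structure as in Definition~\ref{def:approxazcdp} but with the stronger max-divergence bounds $D_\infty(P'\|Q')\le\diffp$ and $D_\infty(Q'\|P')\le\diffp$, i.e.\ the privacy loss between $P'$ and $Q'$ lies in $[-\diffp,\diffp]$ almost surely. It then suffices to show that a privacy loss bounded in $[-\diffp,\diffp]$ forces $D_\lambda(P'\|Q')\le\tfrac12\diffp^2\lambda$ for all $\lambda\ge1$. Writing $t=\lambda-1$ and $\mu=\mathbb{E}_{P'}[Z]=D_{\mathrm{KL}}(P'\|Q')$, Hoeffding's lemma gives $\mathbb{E}_{P'}[e^{tZ}]\le e^{t\mu+t^2\diffp^2/2}$, hence $D_\lambda(P'\|Q')\le\mu+\tfrac12\diffp^2 t$; the claim follows once $\mu\le\tfrac12\diffp^2$, which I would establish by an extremal two-point argument using $\mathbb{E}_{Q'}[e^Z]=1$ with $Z\in[-\diffp,\diffp]$, yielding $\mu\le\diffp\tanh(\diffp/2)\le\tfrac12\diffp^2$. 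This delivers $\delta$-approximate $\tfrac12\diffp^2$-CDP.

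I expect the delicate points to be twofold. In the CDP-to-DP direction, the main obstacle is executing the Chernoff optimization so the parameters land exactly on $\rho+2\sqrt{\rho\log(1/\delta')}$ and verifying that the $\delta$-approximate mixture makes the two failure terms $\delta'$ and $\delta$ add rather than compound. In the DP-to-CDP direction, the subtle step is the mean bound $\mu\le\tfrac12\diffp^2$, since the naive Hoeffding exponent $\tfrac18(2\diffp)^2$ only controls the fluctuation and one must separately exploit $\mathbb{E}_{Q'}[e^Z]=1$ to bound the drift; simultaneously handling the two-sided symmetric divergence bounds is the remaining bookkeeping.
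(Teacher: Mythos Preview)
The paper does not supply its own proof of this statement: Theorem~\ref{thm:CDPtoDP} is quoted from \citet{BunSt16} as a preliminary result without argument, so there is no in-paper proof to compare against. Your proposal is a correct and self-contained reconstruction of the Bun--Steinke conversions. The CDP-to-DP direction is exactly the standard Chernoff argument on the privacy loss random variable, with the optimization over $\lambda$ carried out cleanly to hit $\diffp=\rho+2\sqrt{\rho\log(1/\delta')}$, and the mixture bookkeeping correctly makes the two failure masses add. For the DP-to-CDP direction your route---Hoeffding's lemma on the bounded privacy loss together with the extremal bound $D_{\mathrm{KL}}(P'\|Q')\le\diffp\tanh(\diffp/2)\le\diffp^2/2$ obtained from the constraint $\mathbb{E}_{Q'}[e^Z]=1$---is a valid variant; the original Bun--Steinke argument instead reduces by a data-processing step to the two-point randomized-response pair and verifies $D_\lambda\le\lambda\diffp^2/2$ by direct computation on that pair. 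Both approaches identify the same extremal distribution and deliver the same constant $\diffp^2/2$, so the difference is cosmetic.
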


The classical CDP mechanism is the Gaussian Mechanism.  Note that the Gaussian Mechanism was originally introduced as satisfying approximate DP, but it was then shown to satisfy pure CDP in later work \cite{DworkRo16, BunSt16}.  

\begin{theorem}[\citet{BunSt16}]
\label{thm:Gaussian}
Let $f: \cX \to \R^d$ have $\ell_2$-sensitivity $\Delta_2(f)$, then the mechanism $M: \cX \to \R^d$ where $M(x) = f(x) + (Z_1, \cdots, Z_d)$ with $\{Z_i \} \stackrel{i.i.d.}{\sim} \mathrm{N}(0, \tfrac{\Delta_2(f)^2}{2\rho})$ is $\rho$-CDP for $\rho>0$.
\end{theorem}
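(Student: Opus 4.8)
The plan is to verify the pure ($\delta = 0$) case of Definition~\ref{def:approxazcdp} directly: for every pair of neighboring datasets $x \sim x'$ and every $\lambda \geq 1$ I must show $D_\lambda(M(x) \| M(x')) \leq \rho\lambda$ together with the symmetric inequality. Since $M(x)$ is the Gaussian vector $\mathrm{N}(f(x), \sigma^2 I_d)$ with $\sigma^2 = \Delta_2(f)^2/(2\rho)$, the whole statement reduces to computing the R\'enyi divergence between two multivariate Gaussians that share the common covariance $\sigma^2 I_d$ and differ only in their means $\mu_0 = f(x)$ and $\mu_1 = f(x')$.

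First I would exploit the fact that the coordinates $Z_1, \ldots, Z_d$ are i.i.d., so both $M(x)$ and $M(x')$ are product distributions across coordinates. R\'enyi divergence is additive over products, hence $D_\lambda(M(x) \| M(x')) = \sum_{i=1}^d D_\lambda\big(\mathrm{N}(\mu_{0,i}, \sigma^2) \,\|\, \mathrm{N}(\mu_{1,i}, \sigma^2)\big)$, reducing everything to the one-dimensional case. Next I would carry out that one-dimensional computation from the definition $D_\lambda(P\|Q) = \tfrac{1}{\lambda - 1}\log \int p^\lambda q^{1-\lambda}$: writing out the two Gaussian densities and combining exponents, the integrand is proportional to a Gaussian, and completing the square yields the closed form $D_\lambda\big(\mathrm{N}(a,\sigma^2)\|\mathrm{N}(b,\sigma^2)\big) = \frac{\lambda (a-b)^2}{2\sigma^2}$ (valid for $\lambda > 1$ and extending to the $\lambda = 1$ case, i.e.\ KL divergence, by continuity). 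Summing over coordinates gives $D_\lambda(M(x)\|M(x')) = \frac{\lambda\, \|f(x) - f(x')\|_2^2}{2\sigma^2}$.

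Finally I would bound $\|f(x) - f(x')\|_2 \leq \Delta_2(f)$ by the definition of $\ell_2$-sensitivity and substitute $\sigma^2 = \Delta_2(f)^2/(2\rho)$, which collapses the expression to exactly $\rho\lambda$. Because the closed form depends on the means only through $(a-b)^2$, it is symmetric in its two arguments, so the reverse inequality $D_\lambda(M(x')\|M(x)) \leq \rho\lambda$ holds identically, completing the verification of $\rho$-CDP.

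The only genuine work is the Gaussian integral / completing-the-square step that produces the closed form for the divergence; everything else is bookkeeping. In particular, there is no obstacle around low-probability tail events here, since we are establishing \emph{pure} CDP with $\delta = 0$ and can take $P' = A(x)$, $Q' = A(x')$ in Definition~\ref{def:approxazcdp} with the degenerate components $P'', Q''$ carrying no mass.
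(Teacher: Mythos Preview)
Your proposal is correct: the computation of the R\'enyi divergence between two Gaussians with common covariance via completing the square, the additivity over product coordinates, and the final substitution of $\sigma^2 = \Delta_2(f)^2/(2\rho)$ are exactly the standard argument and yield $D_\lambda(M(x)\|M(x')) \leq \rho\lambda$ as required. Note, however, that the paper does not actually prove this statement at all --- Theorem~\ref{thm:Gaussian} is simply quoted from \citet{BunSt16} without proof --- so there is no ``paper's own proof'' to compare against; your write-up supplies what the paper omits by citation.
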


We also state the composition property of CDP, showing that the overall privacy parameters degrade after multiple CDP algorithms are run on a dataset.
\begin{theorem}
\label{thm:compCDP}
Let $A_1: \cX \to \cY$ be $\delta_1$-approximate $\rho_1$-CDP and $A_2: \cX \times \cY \to \cZ$ where $A_2(\cdot, y)$ is $\delta_2$-approximate $\rho_2'$-CDP for all $y \in \cY$.  Then $A: \cX \to \cZ$ where $A(x) = A_2(x, A_1(x))$ is $(\delta_1 + \delta_2 - \delta_1 \cdot \delta_2)$-approximate $\rho_1 + \rho_2$-CDP.
\end{theorem}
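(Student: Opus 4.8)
My plan is to prove this adaptive composition statement by realizing each mechanism's mixture decomposition through latent Bernoulli ``failure'' indicators, placing them on a common probability space, and then applying the standard adaptive composition of R\'enyi divergences to the non-failure parts. Throughout I write $\rho_2$ for the second mechanism's R\'enyi parameter (matching the conclusion).

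First I would unpack Definition~\ref{def:approxazcdp} for each piece. Since $A_1$ is $\delta_1$-approximate $\rho_1$-CDP, write $A_1(x) \sim (1-\delta_1)P_1' + \delta_1 P_1''$ and $A_1(x') \sim (1-\delta_1)Q_1' + \delta_1 Q_1''$, and realize the draw of $y = A_1(x)$ by first sampling an indicator $b_1 \in \{0,1\}$ with $\Pr[b_1 = 1] = \delta_1$, then drawing $y \sim P_1'$ if $b_1 = 0$ and $y \sim P_1''$ otherwise. Similarly, for each fixed $y$, since $A_2(\cdot, y)$ is $\delta_2$-approximate $\rho_2$-CDP, write $A_2(x,y) \sim (1-\delta_2)P_{2,y}' + \delta_2 P_{2,y}''$ and realize $z = A_2(x,y)$ with an indicator $b_2$ having $\Pr[b_2 = 1] = \delta_2$. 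The crucial point is that the failure weight $\delta_2$ does not depend on $y$, so I can take $b_2$ independent of $(b_1, y)$ in a single joint coupling; this independence is exactly what produces the product structure in the final $\delta$.

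Next I would define the non-failure distributions of the composition by conditioning on $\{b_1 = 0, b_2 = 0\}$: let $P'$ be the law of $(y,z)$ under input $x$ given no failure, i.e. $y \sim P_1'$ followed by $z \sim P_{2,y}'$, and define $Q'$ analogously under $x'$. By independence of the indicators the no-failure event has probability $(1-\delta_1)(1-\delta_2)$, so collecting the remaining mass into $P'', Q''$ yields $A(x) \sim (1-\delta)P' + \delta P''$ with $\delta = 1 - (1-\delta_1)(1-\delta_2) = \delta_1 + \delta_2 - \delta_1\delta_2$, matching the claimed failure probability. To bound the divergence between the non-failure parts I would factor the joint densities as $P'(y,z) = P_1'(y)P_{2,y}'(z)$ and $Q'(y,z) = Q_1'(y)Q_{2,y}'(z)$; the order-$\lambda$ integral then splits so that the inner $z$-integral equals $\exp((\lambda-1)D_\lambda(P_{2,y}'\|Q_{2,y}')) \le \exp((\lambda-1)\rho_2\lambda)$ uniformly in $y$, leaving the outer $y$-integral equal to $\exp((\lambda-1)D_\lambda(P_1'\|Q_1'))$. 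Taking logarithms gives $D_\lambda(P'\|Q') \le \rho_2\lambda + D_\lambda(P_1'\|Q_1') \le (\rho_1+\rho_2)\lambda$ for every $\lambda \ge 1$, and the same computation with $x$ and $x'$ swapped gives the reverse bound $D_\lambda(Q'\|P') \le (\rho_1+\rho_2)\lambda$. Since the mechanism only releases $z$ and not the pair $(y,z)$, I would finish by invoking the data-processing inequality for R\'enyi divergence---dropping the $y$-coordinate is post-processing---so the marginal laws of $z$ under $P'$ and $Q'$ inherit the same bounds.

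I expect the main obstacle to be the coupling construction rather than the divergence calculation: one must argue carefully that the two mixture decompositions, the second of which is indexed by $y$, can be placed on a common probability space with independent failure indicators, since it is precisely the uniformity of the weight $\delta_2$ over $y$ that makes the failure probabilities multiply and produces the inclusion--exclusion form $\delta_1 + \delta_2 - \delta_1\delta_2$. A secondary point to handle cleanly is the boundary case $\lambda = 1$, where $D_\lambda$ reduces to KL divergence, which follows from the same factorization or by continuity in $\lambda$.
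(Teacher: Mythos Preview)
The paper states this composition theorem in the Preliminaries section as a known background result and gives no proof of its own, so there is nothing to compare against line by line. Your argument is the standard one and is correct: realize each approximate-CDP guarantee via a Bernoulli failure indicator, use the uniformity of $\delta_2$ in $y$ to make $b_2$ independent of $(b_1,y)$ so that the non-failure probability factors as $(1-\delta_1)(1-\delta_2)$, apply the chain-rule inequality for R\'enyi divergence to the product kernels $P_1'(y)P_{2,y}'(z)$ and $Q_1'(y)Q_{2,y}'(z)$, and finish with data processing to drop the $y$-coordinate. You also correctly flag the typo $\rho_2'$ versus $\rho_2$ in the statement. The only technicality you leave implicit is measurable selection of the decompositions $y \mapsto (P_{2,y}',P_{2,y}'')$, which is needed for the joint coupling and the integral to be well-defined; this is routinely assumed in the literature and would not be expected in a preliminaries proof.
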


\section{Private Count Release}

Rather than use a \emph{partition selection} DP algorithm, we will instead utilize the Unknown Domain Gumbel mechanism from \cite{DurfeeRo19} to iteratively find the highest counts.  The Unknown Domain Gumbel algorithm in Algorithm~\ref{alg:UnkGumbel} takes a few parameters, beyond the typical privacy parameters: the maximum number of things we want to return $k$ and the total ranked counts that we have access to $\bar{k} > k$. One can imagine that as a first step we would compute a histogram over all possible items in the dataset with their corresponding counts.  We will assume that a user can contribute to a single count by at most 1 ($\ell_\infty$-sensitivity is 1) but a user can modify as many as the maximum possible number of counts, $\bar{k}$.    This is easily achieved by running a distinct count aggregate function by a user ID column, rather than a doing COUNT(*) in SQL and ordering by the distinct counts in descending order up to $\bar{k}$.
\begin{algorithm}
\caption{Unknown Domain Gumbel from Top-$(\bar{k}+1)$}\label{alg:UnkGumbel}
\begin{algorithmic}
    \Require Histogram $h$, noise scale $\beta>0$, threshold $T>0$, and $k, \bar{k}$
    \Ensure Sorted list of at most $k$ items, $I_k$.
    \State Let $h_{(\bar{k})}$ be the histogram consisting of the top-$(\bar{k})$ items 
    \State Let $c_{(\bar{k}+1)}$ be the count of the $(\bar{k} + 1)$-th item in $h$.
    \State Set $\tilde{T} = T + c_{(\bar{k}+1)}+\mathrm{Gumb}(\beta)$
    \State Initialize $\tilde{h} = \emptyset$.
    \For{each item $i$ where $(i, c_i) \in h_{(\bar{k})}$ such that $c_i >0$}
        \State Set $\tilde{c}_i = c_i+ \mathrm{Gumb}(\beta)$
        \If{$\tilde{c}_i > \tilde{T}$ }
            \State $\tilde{h} = \tilde{h} \cup \left\{ (i, \tilde{c}_i) \right\}$
        \EndIf
    \EndFor
    \State Let $I_k$ be the ordered list of at most $k$ items that are sorted in descending order by their count in $\tilde{h}$.  
    \If{$I_k<k$}
        \State $I_k = I_k \cup \{ \bot \}$
    \EndIf
\end{algorithmic}
\end{algorithm}
We then state Unknown Domain Gumbel's privacy guarantee.
\begin{theorem}[\citet{DurfeeRo19, Rogers23}]
\label{thm:unkGumbel}
    Assume that a user can modify all counts in an input histogram $h$, but can only change a single count by at most 1.  If we use the noise scale $\beta = 1/\diffp$, and threshold 
    \[
    T =  1 + \frac{1 }{\diffp} \log(\tfrac{\bar{k}	}{\delta}),
    \] 
    then Algorithm~\ref{alg:UnkGumbel} is $\delta$-approximate $k\diffp^2/8$-CDP.
\end{theorem}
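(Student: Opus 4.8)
The plan is to recognize Algorithm~\ref{alg:UnkGumbel} as an instance of iterative report-noisy-max with Gumbel noise, reduce each selection to the exponential mechanism through the Gumbel-max trick, and then separately control the per-step R\'enyi divergence and the failure probability $\delta$ introduced by only having access to the top-$\bar k$ domain. First I would make the reduction precise: perturbing each accessible count $c_i$ by independent $\mathrm{Gumb}(\beta)$ noise, together with the threshold pseudo-score $T + c_{(\bar k+1)}$ carrying its own Gumbel noise, and reporting the argmax is distributionally identical to sampling an item (or the stopping symbol $\bot$) with probability proportional to $\exp(c_i/\beta)$; with $\beta = 1/\diffp$ this is exactly the exponential mechanism with quality scores $c_i$ and multiplier $\diffp$. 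For Gumbel noise the one-shot extraction of the sorted top-$k$ above the threshold is distributionally equivalent to peeling, i.e. running this exponential mechanism $k$ times and removing the chosen item after each draw, which is the form I would analyze.

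Next I would bound the cost of a single peeling step using the bounded-range property. The key structural fact in the add/remove model is monotonicity: inserting or deleting one user changes each count by at most $1$ in a single direction, so $c_i(x) - c_i(x') \in \{0,1\}$ simultaneously for all items. Writing the exponential-mechanism log-likelihood ratio $\ln\frac{\Pr[M(x)=y]}{\Pr[M(x')=y]}$, the normalizing constants cancel when one forms the double ratio across two outputs $y,y'$, and the surviving score differences lie in $\diffp\cdot\{-1,0,1\}$ rather than $\diffp\cdot[-2,2]$. Hence the mechanism is $\diffp$-bounded-range rather than $2\diffp$-bounded-range, and invoking the bounded-range-to-CDP conversion from \citet{DurfeeRo19, Rogers23} each step is $\tfrac{\diffp^2}{8}$-CDP. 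This monotonicity is exactly what yields the factor of $8$ in the statement; dropping it would give $\tfrac{\diffp^2}{2}$.

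The hard part will be the unknown-domain accounting, namely justifying the $\delta$-approximation and the specific threshold $T = 1 + \tfrac{1}{\diffp}\log(\bar k/\delta)$. Here I would construct an explicit coupling of the runs on $x$ and $x'$ under which, on a ``good'' event, the two executions select the same items and the analysis collapses to the known-domain exponential mechanism above. The complementary ``bad'' event is that some item lying outside the accessible top-$\bar k$, whose true count is at most $c_{(\bar k+1)}$, receives enough Gumbel noise to cross the noisy threshold and thereby distinguish the neighbors. Since the threshold also carries Gumbel noise, a crossing is governed by the difference of two independent $\mathrm{Gumb}(1/\diffp)$ variables, which is logistic with exponential upper tail $\approx e^{-t\diffp}$; the offset $1 + \tfrac{1}{\diffp}\log(\bar k/\delta)$ then makes each crossing probability at most $\delta/\bar k$, and a union bound over the at most $\bar k$ relevant items caps the total failure probability at $\delta$. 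This bad event is what supplies the $P'', Q''$ components of the mixture in Definition~\ref{def:approxazcdp}.

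Finally I would assemble the pieces: conditioned on the good coupling event the mechanism is a $k$-fold peeling composition of $\tfrac{\diffp^2}{8}$-CDP steps, so by the CDP composition of Theorem~\ref{thm:compCDP} it is $\tfrac{k\diffp^2}{8}$-CDP, and folding the $\le\delta$ bad event into the mixture decomposition yields the claimed $\delta$-approximate $\tfrac{k\diffp^2}{8}$-CDP guarantee. The two points I would verify most carefully are that the distinct-count aggregation genuinely has $\ell_\infty$-sensitivity $1$ so that the monotone $\{0,1\}$ bound holds, and that the noisy threshold is treated symmetrically on both sides of the coupling so it does not itself leak privacy beyond the accounted budget.
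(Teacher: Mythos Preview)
The paper does not prove this theorem; it is stated as a cited result from \cite{DurfeeRo19, Rogers23} with no accompanying argument, so there is no ``paper's own proof'' to compare against. Your proposal is a faithful reconstruction of the proof in those references: the Gumbel--max reduction to the exponential mechanism, the peeling equivalence for top-$k$, the monotone/bounded-range argument giving $\diffp^2/8$ per step, the logistic-tail union bound over the $\bar k$ inaccessible items to obtain the $\delta$ slack, and CDP composition. That is exactly the route taken in \cite{DurfeeRo19} (for the unknown-domain coupling and threshold calibration) combined with the bounded-range $\to$ CDP conversion that yields the factor of $8$.

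One small point worth tightening when you write it out: the ``$+1$'' in $T$ is there because $c_{(\bar k+1)}$ itself has sensitivity $1$ between neighbors, and the set of items appearing in the accessible top-$\bar k$ can differ between $x$ and $x'$. The coupling in \cite{DurfeeRo19} handles this by comparing both runs to an idealized mechanism on the full (infinite) domain, not by directly coupling $x$ to $x'$; your sketch slightly conflates these. It does not change the conclusion, but the intermediate object is what makes the union bound over exactly $\bar k$ items clean and makes the threshold symmetric on both sides, which is the second point you flagged as needing care.
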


We will use the Unknown Domain Gumbel algorithm to iteratively find the element with the highest count, and then add noise to that element's count.  Each of these steps will need to be accounted for in the overall privacy budget, ensuring that we do not return too many counts and hence exceeding our budget.  Note that we can return the $\bot$ element, which is not very informative and tells us that our privacy parameters are too low, i.e. the threshold is too high.  If there is privacy budget remaining, we would like to reduce the noise so that we no longer get a $\bot$ element.  This natural approach requires adaptively picking the privacy parameters, which can be handled with a \emph{privacy filter} \cite{RogersRoUlVa16}.  Traditional differential privacy requires setting all the privacy parameters in advance, but a privacy filter allows us to adaptively select privacy parameters based on prior outcomes is such a way so that we do not exceed our overall privacy budget.  Recent work from \cite{WhitehouseRaRoWu23} shows that we can essentially use traditional DP composition bounds despite the parameters being adaptively selected.  

\begin{theorem}[\citet{WhitehouseRaRoWu23}]
\label{thm:azcdp}
Let $(A_n: \cX \to \cY)_{n \geq 1}$ be an adaptive sequence of algorithms. Assume that $\delta_n,\rho_n : \cY^{n-1} \to \R_{\geq 0}$ and furthermore $\sum_{n=1}^\infty \delta_n(y_{1:n-1}) \leq \delta$ for all $y = (y_1, \cdots, y_k)$.
For any $n \geq 1$, assume that $A_n(\cdot; y_{1:n-1})$ is $\delta_n(y_{1:n-1})$-approximate $\rho_n(y_{1:n-1})$-CDP for any prior outcomes $y_{1:n-1}$. We define the function $N: \cY^\infty \to \N$ where
\[
N(y_1, y_2, \cdots) = \inf \left\{n : \sum_{\ell = 1}^{n+1} \rho_\ell(y_{1:\ell-1}) > \rho\right\}.
\]
Then $A_{1:N(\cdot)}(\cdot)$ is $\delta$-approximate $\rho$-CDP, where $N(x) = N( (A_{n}(x))_{n \geq 1} )$ is a stopping time relative to the natural filtration generated by $A_1(x), A_2(x), \cdots$.  Furthermore, $A_{1:N(\cdot)}(\cdot)$ is $(\rho + 2\sqrt{\rho \log(1/\delta')}, \delta + \delta')$-DP for any $\delta' >0$.  
\end{theorem}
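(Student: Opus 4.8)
The plan is to prove the $\delta$-approximate $\rho$-CDP claim directly via a likelihood-ratio supermartingale combined with an optional-stopping argument at the filter's stopping time $N$; the final $(\rho + 2\sqrt{\rho\log(1/\delta')}, \delta+\delta')$-DP statement then follows immediately by feeding the resulting guarantee into the conversion of Theorem~\ref{thm:CDPtoDP}, so no separate work is needed for it. I would first treat the pure case $\delta = \delta_n \equiv 0$ and then upgrade to the approximate case by a mixture-decomposition plus union-bound argument.

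For the pure case, fix a R\'enyi order $\lambda \geq 1$ and neighboring inputs $x \sim x'$. Writing $Y_{1:n} = (A_1(x),\dots,A_n(x))$ and letting $p_n, q_n$ denote the joint densities of the first $n$ outputs under $x$ and $x'$ with respect to a common dominating measure, adaptivity gives the factorization $p_n(y_{1:n}) = \prod_{\ell=1}^n p_\ell(y_\ell \mid y_{1:\ell-1})$ and likewise for $q_n$. The core object is the process
\[
L_n = \left(\frac{p_n(Y_{1:n})}{q_n(Y_{1:n})}\right)^{\lambda} \exp\!\left(-\lambda(\lambda-1)\sum_{\ell=1}^n \rho_\ell(Y_{1:\ell-1})\right),
\]
which I claim is a supermartingale under $Q$ (outputs drawn from $x'$) relative to the natural filtration $\mathcal{F}_n = \sigma(Y_{1:n})$. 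The point is that each increment $\rho_\ell(Y_{1:\ell-1})$ is $\mathcal{F}_{\ell-1}$-predictable, so conditioning on $Y_{1:\ell-1}$ fixes the algorithm $A_\ell(\cdot; Y_{1:\ell-1})$; its per-step $\rho_\ell$-CDP guarantee is exactly $\mathbb{E}_{q_\ell}[(p_\ell/q_\ell)^{\lambda} \mid Y_{1:\ell-1}] \leq \exp(\lambda(\lambda-1)\rho_\ell(Y_{1:\ell-1}))$, which forces $\mathbb{E}_Q[L_\ell/L_{\ell-1}\mid \mathcal{F}_{\ell-1}] \leq 1$. Since $L_0 = 1$, the process is a nonnegative supermartingale.

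Next I would invoke the filter's defining property: because $N = \inf\{n : \sum_{\ell=1}^{n+1}\rho_\ell > \rho\}$ looks one step ahead, the accumulated budget up to and including step $N$ still satisfies $\sum_{\ell=1}^{N}\rho_\ell(Y_{1:\ell-1}) \leq \rho$. Applying optional stopping (truncating at $N\wedge n$ and passing to the limit with Fatou's lemma to handle the unbounded, possibly infinite, stopping time) yields $\mathbb{E}_Q[L_N] \leq 1$, and bounding the predictable penalty below by $\exp(-\lambda(\lambda-1)\rho)$ gives $\mathbb{E}_Q[(p_N/q_N)^{\lambda}] \leq \exp(\lambda(\lambda-1)\rho)$, i.e.\ $D_\lambda(P_N\|Q_N)\leq \rho\lambda$. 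The measure-theoretic subtlety worth spelling out is that the released object $A_{1:N(\cdot)}(\cdot)$ has variable length, but since $N$ is a stopping time the event $\{N=n\}$ is $\mathcal{F}_n$-measurable, so the stopped distribution has density $p_n\,\mathbbm{1}[N=n]$ on $\{N=n\}$ and its $\lambda$-R\'enyi moment is precisely $\mathbb{E}_Q[(p_N/q_N)^{\lambda}]$. Repeating with the roles of $x,x'$ swapped gives the reverse divergence, and since $\lambda\geq 1$ was arbitrary this establishes $\rho$-CDP in the pure case.

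For the approximate case I would write each step's outputs in the mixture form of Definition~\ref{def:approxazcdp}, run the supermartingale argument above on the ``clean'' components $P'_\ell, Q'_\ell$ (which are genuinely $\rho_\ell$-CDP), and charge the event that any step ever lands in a ``dirty'' component $P''_\ell$ or $Q''_\ell$ to the overall budget via a union bound, using that the predictable $\delta_\ell(Y_{1:\ell-1})$ sum to at most $\delta$; reassembling produces the required global mixture with leftover mass at most $\delta$. I expect the main obstacle to be exactly this last step: making optional stopping fully rigorous for an unbounded data-dependent stopping time while simultaneously carrying the approximate-CDP bookkeeping, so that the adaptively chosen $\delta_\ell$ aggregate cleanly into a single $\delta$ without the coupling across steps disturbing the martingale structure. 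By comparison, the filter inequality $\sum_{\ell\leq N}\rho_\ell \leq \rho$ and the invocation of Theorem~\ref{thm:CDPtoDP} are routine.
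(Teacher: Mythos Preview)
The paper does not actually prove Theorem~\ref{thm:azcdp}: it is stated as a cited result from \citet{WhitehouseRaRoWu23} and used as a black box in the short proof of Theorem~\ref{thm:overallPrivacy}. There is therefore no in-paper argument to compare your proposal against.

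That said, your sketch is sound and tracks the argument in the cited reference closely. The key ingredients you identify---the $(p/q)^\lambda$ process compensated by the predictable $\exp(-\lambda(\lambda-1)\sum_\ell \rho_\ell)$ factor, the per-step R\'enyi bound giving the supermartingale property, the one-step-lookahead definition of $N$ ensuring $\sum_{\ell\leq N}\rho_\ell \leq \rho$, and optional stopping via truncation plus Fatou---are exactly what is used there. Your treatment of the approximate case (run the martingale on the clean components and union-bound the dirty mass using the almost-sure cap $\sum_n \delta_n \leq \delta$) is also the right shape, and you have correctly flagged the genuine technical friction: coupling the mixture decompositions across adaptively chosen steps while keeping the supermartingale intact. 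The final DP statement is, as you note, just Theorem~\ref{thm:CDPtoDP} applied to the $\delta$-approximate $\rho$-CDP conclusion.
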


This type of composition will allow us to start with a very small privacy parameter $\diffp_1$ and then if the Unknown Domain Gumbel algorithm returns $\bot$ for the top-$1$ query, we then increase $\diffp_1$ to $\diffp_2$.  We then continue in this way until we actually get an element that is not $\bot$.  However, we will have to accumulate the privacy losses that were consumed for each unsuccessful iteration of Unknown Domain Gumbel.  We then adopt the doubling approach, where $\diffp_{i+1} = \sqrt{2} \diffp_i$, see \cite{LigettNeRoWaWu17, RogersSaWuRa23} for more details.  

Once we have selected a non-$\bot$ element, we then want to use the Gaussian Mechanism on the count for that discovered element.  Note that the count has $\ell_\infty$-sensitivity of 1, which is also the $\ell_2$-sensitivity since it is a one dimensional count.  The question then becomes, what noise level should we use?  We could just use the last $\diffp_i$ parameter we used when we found that element, but can  this be done smarter?  More generally, if you have a target relative error $r$, one can use the fact that the released item from Unknown Gumbel most likely has a true count $c$ of at least $ 1 + \log(\bar{k}/\delta)/\diffp_i$, to then use $\sigma(\diffp_i)$ standard deviation noise to the count of the selected item, where we want $\tfrac{c+ 1.5 \sigma(\diffp_i)}{c} \leq (1 + r)$, so that
\begin{equation}
\sigma(\diffp_i) = \left(\frac{r}{1.5} \right) \cdot \left(1 + \frac{ \log(\bar{k}/\delta_*))}{\diffp_i} \right).
\label{eq:SigmaNoise}
\end{equation}
Note that we can use any constant in front of $ \sigma(\diffp_i)$ depending on the accuracy requirements.  After we have an element and its noisy count, we then repeat the whole process again until we have exhausted the given privacy budget $\rho,\delta$.  Note that it is possible to improve this approach further by using the \emph{Brownian Noise Reduction} \cite{WhitehouseWuRaRo22}, rather than using the Gaussian mechanism to adaptively pick the largest noise level that would still provide a target relative error.  The privacy gets accumulated in the same way as if a Gaussian mechanism were used with that noise level chosen in advance \cite{RogersSaWuRa23}.  

We present our overall approach, Private Count Release (PCR), in Algorithm~\ref{alg:pcr}.  We want to point out that this whole process can be run with a single dataframe stored in memory that can be computed offline once.  The initial step of creating a histogram, or dataframe, with elements and their corresponding counts might require Spark, but that needs to only be accessed once to create a dataframe of at most $\bar{k}$ rows, not scaling with the number of users.  The Unknown Domain Gumbel and Gaussian mechanisms need not ever access that event level dataset with user IDs, which can be massive.  

\begin{algorithm}
\caption{Private Count Release}\label{alg:pcr}
\begin{algorithmic}
    \Require Histogram $h$, minimum privacy parameters $\diffp_*>0$, $\delta_*>0$, bound $\bar{k}$, and overall privacy parameters $\rho> \diffp_*^2/4$ and $\delta > \delta_*$.
    \Ensure Noisy Histogram $\tilde{h}$ with elements and counts.
    \State Initialize $\rho_\text{curr} = 0$, $\delta_\text{curr} = 0$, $\diffp_{\text{last}} = \diffp_*$, $\tilde{h} = \{ \}$.
    \While{$\rho_\text{curr} + \diffp_{\text{last}}^2/4 \leq \rho$ and $\delta_\text{curr} + \delta_* \leq \delta$}
    	\State $\beta = \frac{1}{\diffp_{\text{last}}}$ 
	\State $T = 1 + \beta \cdot \log(\bar{k}/\delta_*)$
   	\State $s = $ Unknown Domain Gumbel with inputs $(h, \beta, T, k = 1, \bar{k})$
	\State Update $\rho_\text{curr} = \rho_\text{curr} + \diffp_{\text{last}}^2/8$, $\delta_{\text{curr}} = \delta_{\text{curr}}+ \delta_*$.
	\If{$s == \bot$}
		\State $\diffp_{\text{last}} = \sqrt{2} \cdot \diffp_{\text{last}}$.
	\Else
		\State $\sigma = \sigma(\diffp_{\text{last}})$ from \eqref{eq:SigmaNoise} as long as $\sigma > 2/\diffp_{\text{last}}$
		\State Update the result histogram $\tilde{h}[s] = h[s] + \mathrm{N}(0, \sigma^2)$
		\State Remove $s$ from $h$, as it is already discovered.
		\State Update $\rho_\text{curr} = \rho_\text{curr} + \left( 1/\sigma(\diffp_{\text{last}}) \right)^2/2$.
	\EndIf
    \EndWhile
    \end{algorithmic}
\end{algorithm}
We then state overall privacy guarantee.
\begin{theorem}
\label{thm:overallPrivacy}
    Private Count Release in Algorithm~\ref{alg:pcr} is $\delta$-approximate $\rho$-CDP as well as $(\rho + 2 \sqrt{2\log(1/\delta')}, \delta + \delta')$-DP for all $\delta' >0$
\end{theorem}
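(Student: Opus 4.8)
The plan is to recognize Algorithm~\ref{alg:pcr} as an adaptively stopped composition of two primitive mechanisms and to discharge its privacy guarantee through the approximate-CDP privacy filter of Theorem~\ref{thm:azcdp}. First I would identify the primitives fired inside the \texttt{while} loop. Each iteration calls Unknown Domain Gumbel with $k=1$, noise scale $\beta = 1/\diffp_{\text{last}}$, and threshold $T = 1 + \beta\log(\bar k/\delta_*)$; these are exactly the settings of Theorem~\ref{thm:unkGumbel}, so this call is $\delta_*$-approximate $\diffp_{\text{last}}^2/8$-CDP (taking $k=1$ in the bound $k\diffp^2/8$). When the returned element $s$ is not $\bot$, the algorithm additionally runs the Gaussian mechanism on the scalar count $h[s]$, whose $\ell_2$-sensitivity equals its $\ell_\infty$-sensitivity of $1$; by Theorem~\ref{thm:Gaussian} with $\sigma^2 = 1/(2\rho_{\text{Gauss}})$ this step is (pure, i.e.\ $0$-approximate) $(1/\sigma)^2/2$-CDP. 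Because the parameters at each step ($\diffp_{\text{last}}$, the choice of $\sigma$, and which element is removed) are measurable functions of the outcomes observed so far, the whole run is precisely an adaptive sequence in the sense of Theorem~\ref{thm:azcdp}, with per-step losses $\rho_n \in \{\diffp_{\text{last}}^2/8,\,(1/\sigma)^2/2\}$ and $\delta_n \in \{\delta_*, 0\}$.

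Next I would verify that the two bookkeeping variables $\rho_{\text{curr}}$ and $\delta_{\text{curr}}$ are genuine running upper bounds on the accumulated privacy loss and that the stopping rule respects the budget. The crucial algebraic fact is that the guard $\sigma > 2/\diffp_{\text{last}}$ forces $(1/\sigma)^2/2 < \diffp_{\text{last}}^2/8$, so the total cost of a single iteration (one Gumbel plus at most one Gaussian) is strictly less than $\diffp_{\text{last}}^2/8 + \diffp_{\text{last}}^2/8 = \diffp_{\text{last}}^2/4$. Since the \texttt{while} test pre-reserves exactly $\diffp_{\text{last}}^2/4$ before the body executes, and the hypothesis $\rho > \diffp_*^2/4$ guarantees the loop can start, the invariant $\rho_{\text{curr}} \le \rho$ is preserved after every iteration; consequently every primitive that actually fires keeps the cumulative sum $\sum_\ell \rho_\ell$ at most $\rho$, so the realized run is dominated by the filter's stopping time $N$. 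For the $\delta$ budget, each iteration contributes $\delta_*$ through its lone Gumbel call and $0$ through the Gaussian, and the test $\delta_{\text{curr}} + \delta_* \le \delta$ caps the number of iterations, so the total $\delta$ summed over the entire self-terminating sequence is at most $\delta$, which is the hypothesis $\sum_n \delta_n \le \delta$ demanded by Theorem~\ref{thm:azcdp}.

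With both budgets controlled, I would invoke Theorem~\ref{thm:azcdp} to conclude that the composed, adaptively stopped mechanism is $\delta$-approximate $\rho$-CDP. The approximate-DP assertion is then immediate from the conversion in the final sentence of Theorem~\ref{thm:azcdp} (equivalently Theorem~\ref{thm:CDPtoDP}), applied with the same $\delta'>0$.

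I expect the main obstacle to be reconciling the algorithm's single, conservative, look-ahead \texttt{while} test --- which bundles together two mechanisms and two separate budget constraints --- with the filter's formally $\rho$-only stopping time $N$ and its requirement that $\sum_{n=1}^\infty \delta_n \le \delta$ hold for every outcome sequence, not merely the realized one. The clean way to handle this is to extend the adaptive sequence beyond termination by appending null mechanisms (output a constant, $0$-CDP and $\delta_n = 0$) as soon as either budget would be exceeded; then the $\delta$-cap bounds the infinite $\delta$-sum uniformly over all $\bot$-patterns, while the $\rho$-reservation ensures the realized prefix never runs a step disallowed by $N$. The delicate point worth spelling out is that the constant in the guard $\sigma > 2/\diffp_{\text{last}}$ is not arbitrary: it is exactly the threshold that makes $\diffp_{\text{last}}^2/4$ a valid reservation for an iteration that runs both primitives, which is what keeps $\rho_{\text{curr}}$ an honest upper bound throughout.
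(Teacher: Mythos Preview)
Your proposal is correct and follows exactly the paper's approach: identify each Unknown Domain Gumbel call as $\delta_*$-approximate $\diffp_{\text{last}}^2/8$-CDP via Theorem~\ref{thm:unkGumbel}, each Gaussian step as $(1/\sigma)^2/2$-CDP via Theorem~\ref{thm:Gaussian}, and then invoke the privacy filter of Theorem~\ref{thm:azcdp}. The paper's own proof is a three-line sketch of precisely this, so your version is simply a more careful fleshing-out --- in particular your observation that the guard $\sigma > 2/\diffp_{\text{last}}$ is what makes the reserved $\diffp_{\text{last}}^2/4$ cover both primitives in an iteration is implicit in the algorithm but not spelled out in the paper.
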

\begin{proof}
The proof follows from Unknown Domain Gumbel being $\delta_*$-approximate $\diffp_{\text{last}}^2/8$-CDP for each call to it. Then when we select an element that is not $\bot$, we use the Gaussian mechanism, which is $\left( 1/\sigma(\diffp_{\text{last}}) \right)^2/2$-CDP.  We then apply the privacy filters from Theorem~\ref{thm:azcdp} to claim that the while loop ensures we terminate before the privacy budget is exhausted.
\end{proof}

Although there are some nuisance parameters, $\bar{k}$, $\delta_*$, and $\diffp_*$, they have a small impact to the results as long as they are set sufficiently high or low, as we will see in our experiments.  Essentially $\bar{k}$ should be thought of as the larger the better, but that might pose a slowdown in the computations, $\diffp_*$ determines the largest amount of noise a use case is willing to tolerate, and $\delta_*$ can be tied to the maximum number of results $\ell^*$ that might be expected so that $\delta_* \cdot \ell* \leq \delta$.  We note here that we can also return the standard deviation of noise that is added to each released count with no impact to the overall privacy guarantee.

\section{Results}

We now showcase the results using Private Count Release (PCR) in Algorithm~\ref{alg:pcr}.  We will use several datasets to show that our approach works across various settings.  For each dataset, we consider the number of \emph{users}, which might denote a Wikipedia page, news article, or a user on website.  Further, each user will contribute some number of elements, whether it is words, stock symbol, or movie and we provide percentiles on the number of contributions each user has.  Each distinct contribution across all users is then summed to provide the domain size of the dataset.  Table~\ref{table:data} provides some metrics summarizing each dataset.  We also used code that was used to preprocess data from \cite{CarvalhoWaGo22}, which is available online.\footnote{\url{https://github.com/ricardocarvalhods/diff-private-set-union/tree/main}}  

\begin{table}[h]
\centering
\caption{\label{table:data} Number of users in each dataset we use with the percentiles for the number of contributions each user has after deduplicating the items from each user}
\begin{tabular}{| c |c|c|c|c| c| c|}
\cline{4-7}
 \multicolumn{3}{c}{} & \multicolumn{4}{|c|}{\textbf{Contributions Per User}} \\
\hline
\textbf{Data Set} & \textbf{\# Users}& \textbf{Domain Size} & \textbf{50\%-tile} & \textbf{75\%-tile} & \textbf{95\%-tile} & \textbf{99\%-tile}\\
\hline
Finance & 1,400,469 & 6,193 & 1 & 1 & 1& 1 \\
\hline
Reddit Small Sample & 341 & 10,575 & 23 & 50 & 128& 330 \\
\hline
Reddit Full Data & 223,388 & 153,704 & 23 & 56 & 194 & 450 \\
\hline
Wikipedia & 245,103 & 631,855 & 84 & 164 & 390 & 589\\
\hline
Movie Lens & 162,541 & 59,047 & 71 & 162 & 554 & 1228\\
\hline
\end{tabular}
\end{table}

In all our experiments we will vary the privacy $\rho \in \{0.1, 0.5, 1.0\}$ and fix $\delta = 10^{-6}$.  To determine the contribution bound for each user in the Plume approach, we assume that the actual 95th or 99th percentile on the dataset is known.  This parameter should be computed subject to differential privacy, but to reduce the number of parameters we need to tune, such as how much of the privacy budget to allocate for this step, we allow direct access to the percentile.  Hence, our implementation of Plume will directly compute the 95th or 99th percentile of the unique contributions per user and use this as the contribution bound $m$ for the partition selection.  Then with the discovered domain from partition selection, the contribution bound $m'$ is the 95th or 99th percentile of the unique contributions from the discovered domain per user (again, without differential privacy).  We then use $m'$ as the $\ell_2$-sensitivity in the Gaussian mechanism to return the corresponding counts for the items that were discovered.  There is another hyperparameter to set, which is the proportion $\alpha \in (0,1)$ of the privacy budget that we should allow for partition selection, so that the remaining privacy budget can be used in the Gaussian mechanism.  We fix $\alpha = 0.5$ in our experiments.  We also show results when we remove low counts from the outcomes, to reduce high relative error noisy counts, which we call Plume with Threshold.  The threshold we use depends on the target relative error $r$, which we set to 10\% in our experiments.  The threshold $T$ we use is then to ensure that if the noisy counts $\tilde{h} > T$ then $\tilde{h}$ is likely to have small error with $\sigma$ standard deviation noise from the Gaussian mechanism.  
\[
\frac{\tilde{h} + \sigma}{\tilde{h} - \sigma} \leq 1 + r \implies \tilde{h} \geq \frac{(2 + r) \sigma}{r} =: T
\]

In PCR, we fix parameters $\diffp_* = 0.0005$, $\delta_* = 10^{-11}$, and $\bar{k} = 10,000$ in all experiments.  Note that the datasets vary in size between a few hundred to over a million, yet we use the same parameters to show that we need not focus on hyperparameter tuning and we obtain results that we are accurate.  We measure accuracy as the number of results returned (recall) and the relative error on the returned counts (precision).  We define relative error as the following where $c$ is the true count and $\tilde{c}$ is the noisy count.  
\[
\mathrm{Relative Error} = \frac{|c - \tilde{c}|}{c} \cdot 100
\]
We now present results for each dataset.    

\paragraph{Finance:}   This dataset comes from \cite{FinanceData} and is collected from stock market news data and was also used in \cite{SwanbergDeHa23}.  We use the ``raw\_analyst\_ratings.csv" file with the index of the row as the author id, so that the number of users is the same as the number of rows, i.e. 1,400,469.  Rather than use the ``title" column to use the words, we instead use the stock ticker symbol as the column we want to release counts for.  This is a favorable setting for Plume or any approach that uses contribution bounding as each user can contribute a single item or stock ticker symbol in this case.  Using PCR will be overly conservative due to accumulating the privacy loss with each item that is returned although the entire histogram could be returned with a small amount of noise.   In our implementation, Plume uses a contribution bound of 1 when using either the 95th or 99th percentile, which does not impact the true counts of each stock ticker symbol.  Results are presented in Figure~\ref{fig:Finance}.  
\begin{figure}[H]
\centering
\includegraphics[width=0.45\textwidth]{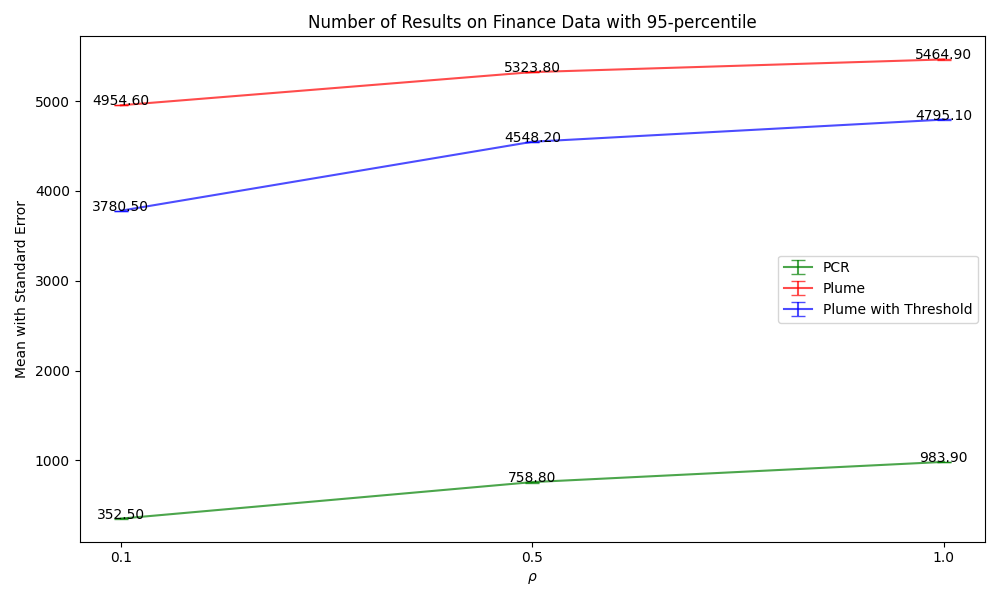}
\includegraphics[width=0.45\textwidth]{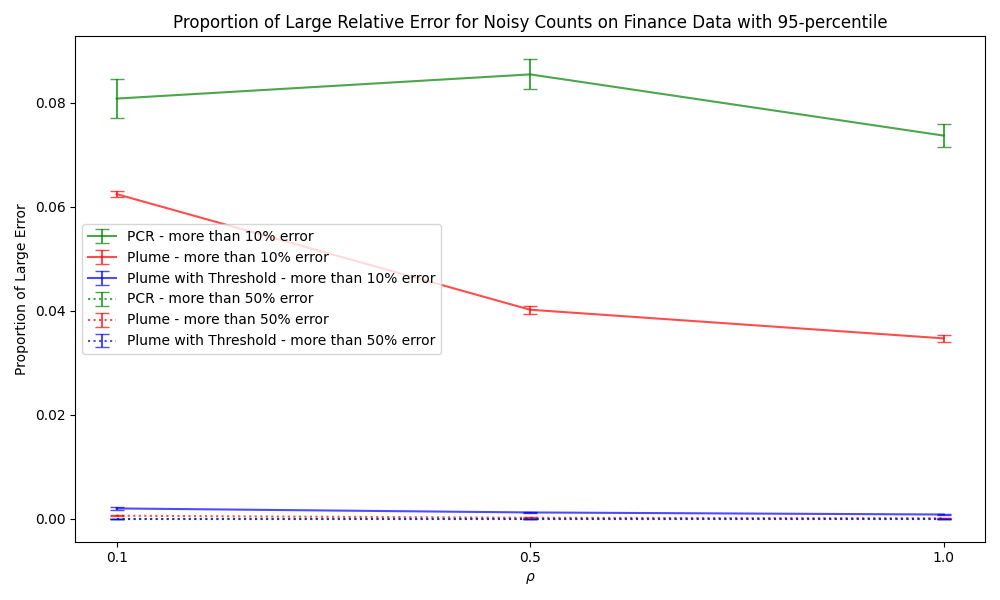}
\caption{\label{fig:Finance} Results for three approaches: PCR, Plume, and Plume with Threshold on the Finance data.  We show recall and precision for $\rho \in \{0.1, 0.5, 1.0 \}$ and $\delta = 10^{-6}$ averaged over 10 independent trials. }
\end{figure}

The key takeaways from these results are that Plume can return a lot more results compared with PCR.  We see that Plume without a threshold returns few of its counts above the 10\% target relative error  PCR returns substantially fewer results and less than 10\% of the noisy counts are beyond the target relative error, which can be adjusted easily.    However, it is quite rare to get such a nice dataset where each user has a single contribution, in which case one can easily use the Laplace or Gaussian noise over positive count items and release only those above a threshold \cite{KorolovaKeMiNt09, WilsonZhLaDeSiGi19}.

\paragraph{Reddit subsample and full data: } The Reddit comments dataset\footnote{\url{https://github.com/heyyjudes/differentially-private-set-union/tree/ea7b39285dace35cc9e9029692802759f3e1c8e8/data}} has been used in previous work, including \cite{GopiGuKuShShYe20, CarvalhoWaGo22, SwanbergDeHa23}.  This data consists of comments from Reddit authors.  We will use both the full dataset and a small sample of the dataset, the first 340 users, to show performance on different scales of data.  To find the most frequent words from distinct authors, we take the set of all distinct words contributed by each author, which can be arbitrarily large and form the resulting histogram which has $\ell_\infty$-sensitivity 1 yet unbounded $\ell_0$-sensitivity.  We then compare our approach with Plume \cite{AminGiJoKuVa22}, which uses contribution bounding as described above, using the actual percentiles on the true data.  Results are presented in Figure~\ref{fig:RedditSample} for the subsample and in Figure~\ref{fig:RedditFull} for the full dataset.  
\begin{figure}[H]
\centering
\includegraphics[width=0.45\textwidth]{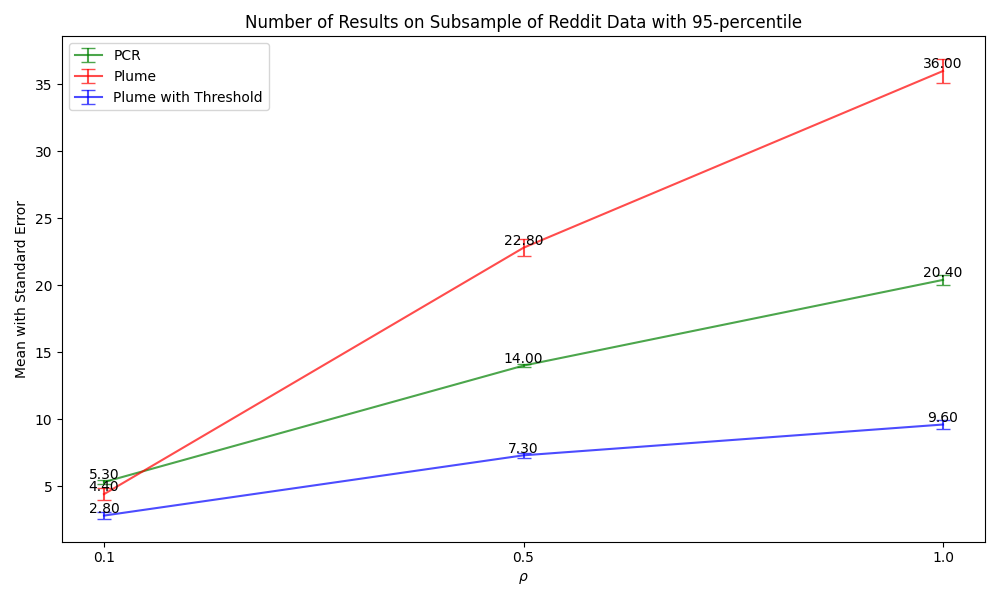}
\includegraphics[width=0.45\textwidth]{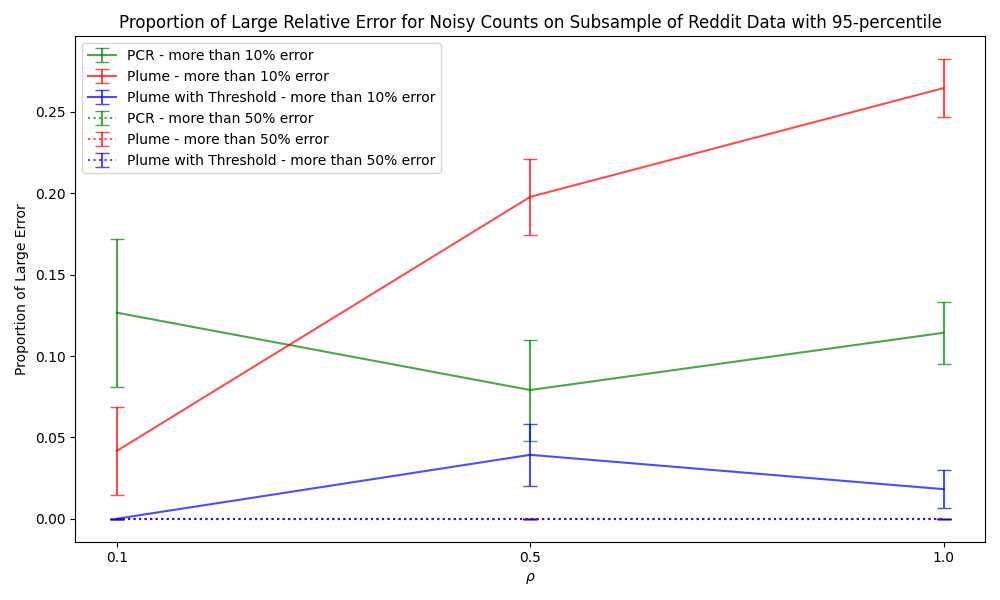}
\includegraphics[width=0.45\textwidth]{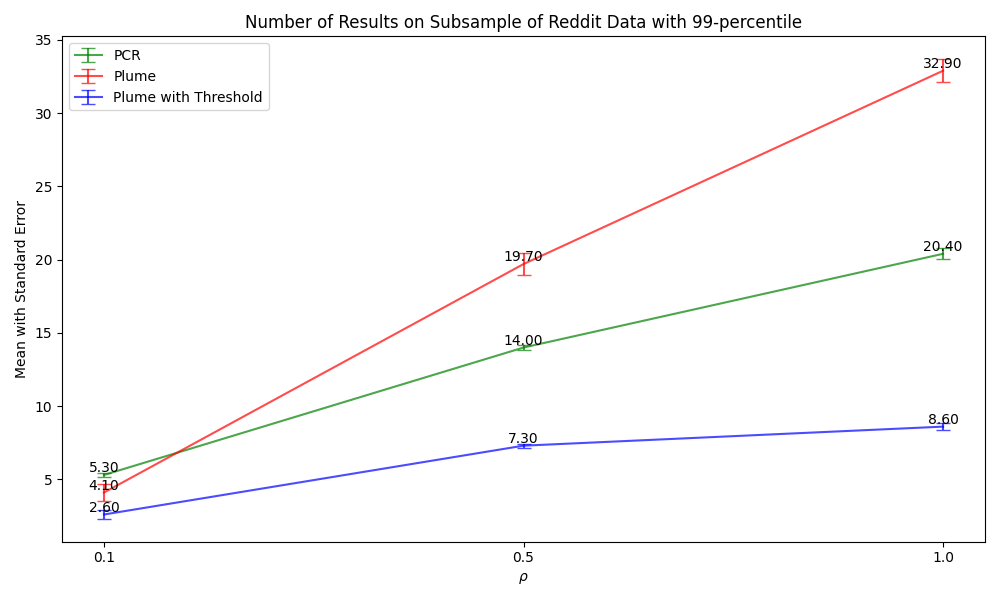}
\includegraphics[width=0.45\textwidth]{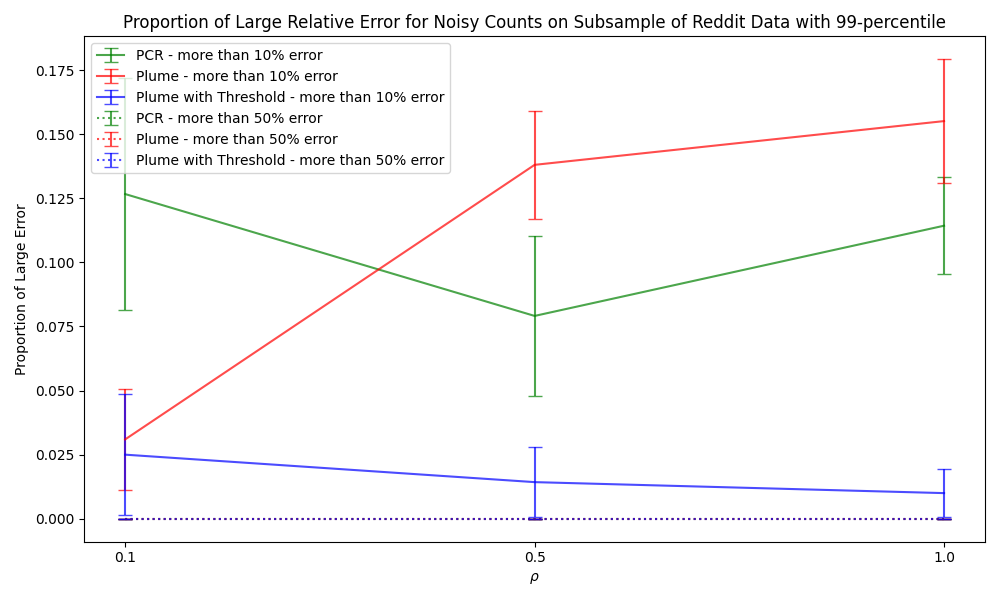}
\caption{\label{fig:RedditSample} Results for three approaches: PCR, Plume, and Plume with Threshold on the Reddit comments subsample data.  We show recall and precision for $\rho \in \{0.1, 0.5, 1.0 \}$ and $\delta = 10^{-6}$ averaged over 10 independent trials. The top plots use the true 95th-percentile for contribution bounding in Plume and the bottom plots use the true 99th-percentile for contribution bounding in Plume.}
\end{figure}

\begin{figure}[H]
\centering
\includegraphics[width=0.45\textwidth]{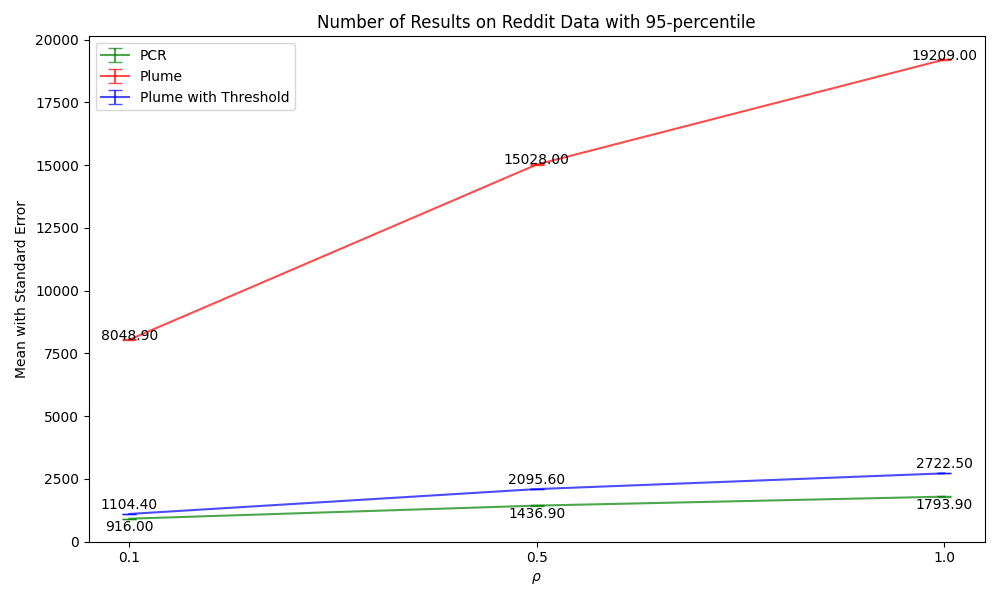}
\includegraphics[width=0.45\textwidth]{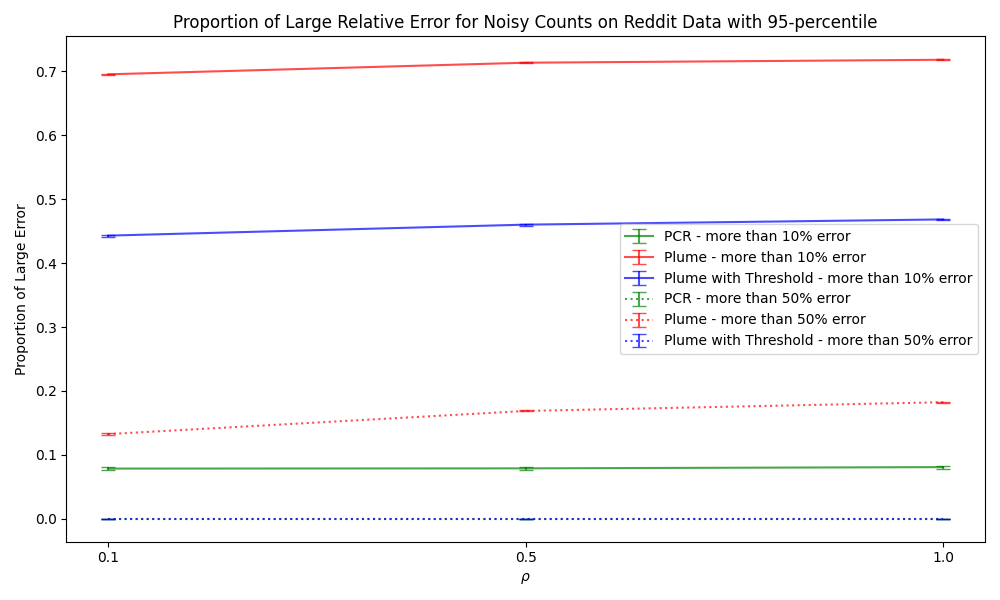}
\includegraphics[width=0.45\textwidth]{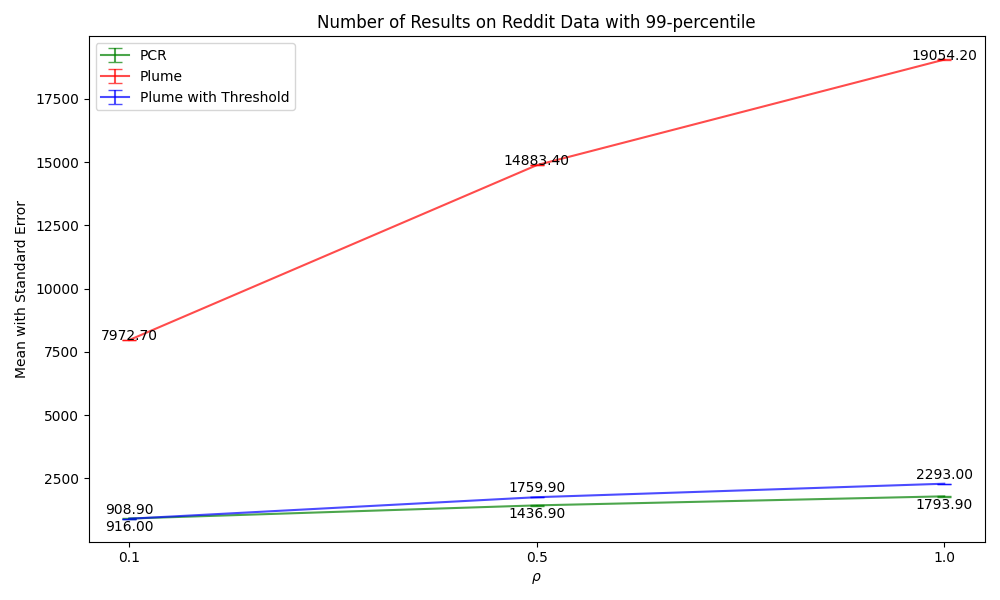}
\includegraphics[width=0.45\textwidth]{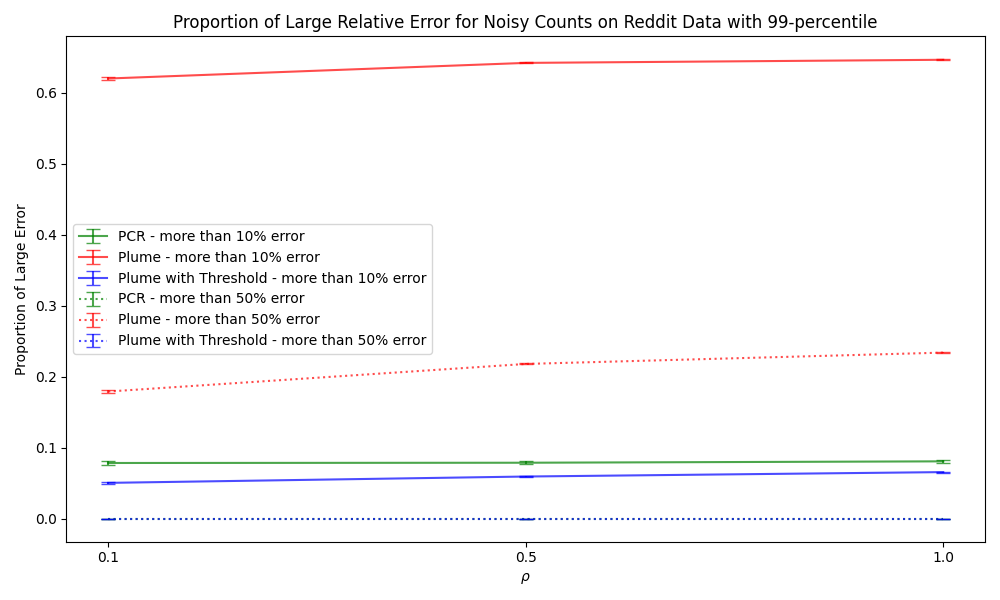}
\caption{\label{fig:RedditFull} Results for three approaches: PCR, Plume, and Plume with Threshold on the Reddit comments full data.  We show recall and precision for $\rho \in \{0.1, 0.5, 1.0 \}$ and $\delta = 10^{-6}$ averaged over 10 independent trials. The top plots use the true 95th-percentile for contribution bounding in Plume and the bottom plots use the true 99th-percentile for contribution bounding in Plume.}
\end{figure}

The key takeaways from these results are that PCR works well with no hyperparameter tuning on vastly different scales of data (from 340 to 223,338 users).  Plume can return a lot more results compared with PCR, but when imposing a threshold, the number of results returned become comparable to PCR.  On the subsample of Reddit data in Figure~\ref{fig:RedditSample}, we see that PCR can return more results than Plume with a threshold while only 10\% of the returned noisy counts in PCR are beyond the 10\% relative error target.  Using either the 95th or 99th percentile in Plume does not seem to impact the results too much, and we see that Plume with a threshold returns almost no results beyond the target relative error, so perhaps a smaller threshold could be used, but this would require tuning.

In Figure~\ref{fig:RedditFull}, we see that the percentile used significantly impacts the relative error from Plume's results.  Even after imposing a threshold, if we use the 95th percentile, almost half of the results from Plume have a relative error beyond the target 10\%. However, when using the 99th percentile, we see that Plume with a threshold returns almost all of its counts with the target relative error.  This shows how important it is to pick an accurate percentile, hence a significant portion of the overall privacy budget should be allocated for this calculation, which we are not accounting for here.  Even if the exact 99th percentile value is used in Plume with a threshold, PCR can return just as many results with the target relative error when the overall privacy loss is small ($\rho = 0.1$).       

\paragraph{Wikipedia: } The Wikipedia dataset comes from \cite{WikiData} and has been used in previous work, including \cite{CarvalhoWaGo22, SwanbergDeHa23}.  This data consists of comments from Wikipedia abstracts, where we treat each Wikipedia page as a user.  To find the most frequent words from distinct users, we take the set of all distinct words contributed by each user, which can be arbitrarily large and form the resulting histogram which has $\ell_\infty$-sensitivity 1 yet unbounded $\ell_0$-sensitivity.  Results  on the Wikipedia data are presented in Figure~\ref{fig:Wiki} .  
\begin{figure}[H]
\centering
\includegraphics[width=0.45\textwidth]{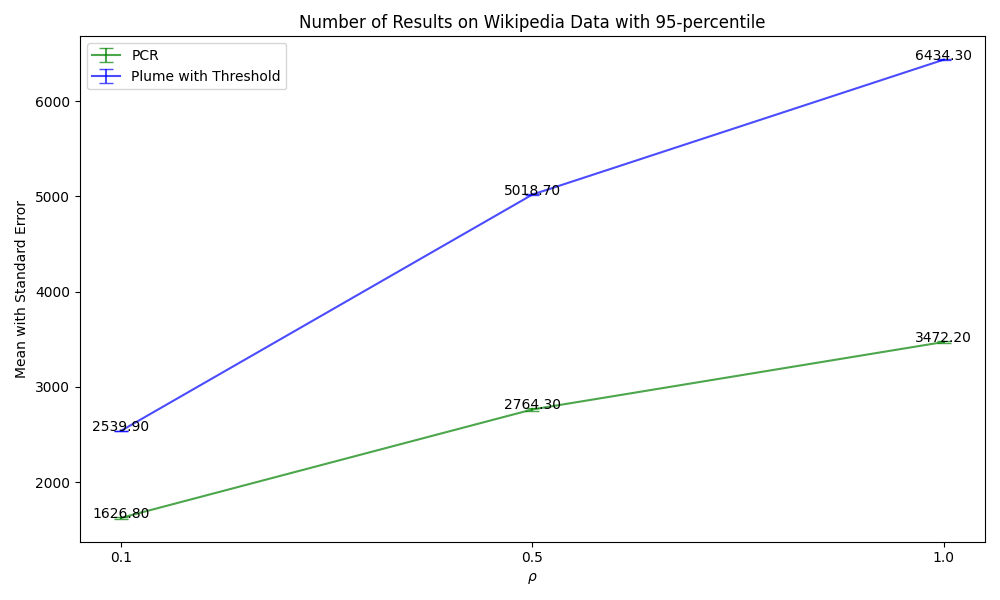}
\includegraphics[width=0.45\textwidth]{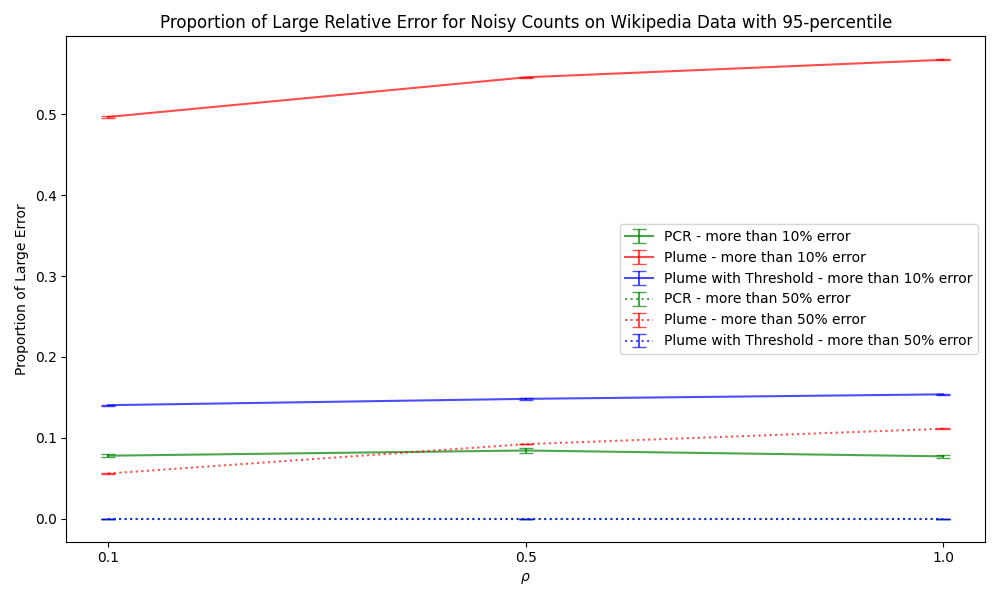}
\includegraphics[width=0.45\textwidth]{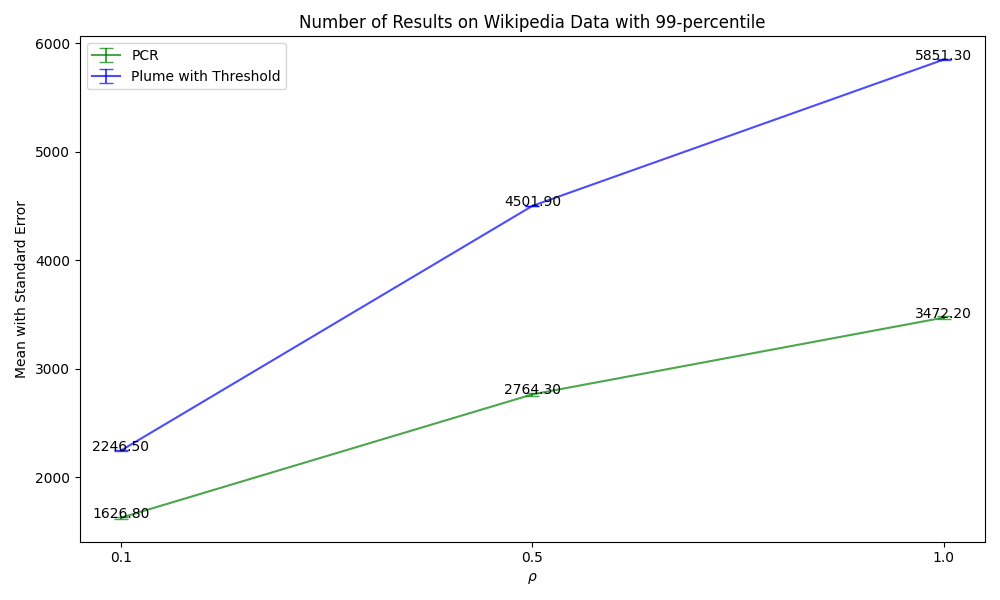}
\includegraphics[width=0.45\textwidth]{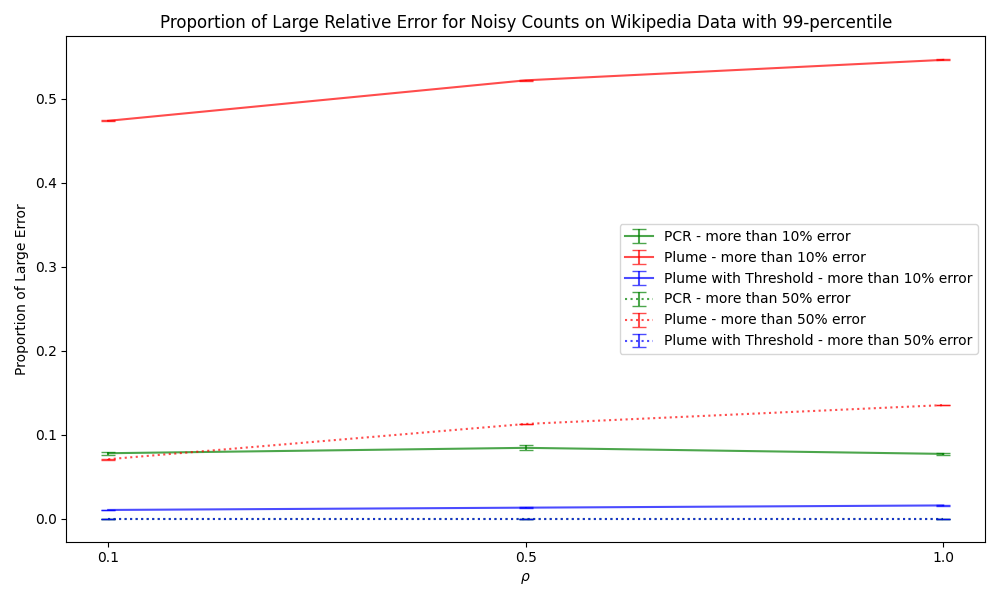}
\caption{\label{fig:Wiki} Results for three approaches: PCR, Plume, and Plume with Threshold on the Wikipedia data.  We show recall and precision for $\rho \in \{0.1, 0.5, 1.0 \}$ and $\delta = 10^{-6}$ averaged over 10 independent trials. The top plots use the true 95th-percentile for contribution bounding in Plume and the bottom plots use the true 99th-percentile for contribution bounding in Plume.}
\end{figure}

We do not show the number of results returned from Plume without a threshold due to the scale being very different from the other approaches, but we write them here.  For the 95th percentile we have: $\rho = 0.1$ we get $12981.5$ results, $\rho = 0.5$ we get $25900.6$ results, and $\rho = 1$ we get $34703.3$ results on average over 10 independent trials.  For the 99th percentile we have: $\rho = 0.1$ we get $13119.9$ results, $\rho = 0.5$ we get $26129.8$ results, and $\rho = 1$ we get $35023.8$ results on average over 10 independent trials.  We have similar takeaways here as in the other datasets, PCR returns fewer results, but almost all are within the target relative error.  We see that Plume with a threshold does return more results than PCR, but note that it still returns more than 10\% of its counts with high relative error when we use the 95th percentile, so getting an accurate percentile is crucial.  

\paragraph{Movie Lens: } The MovieLens dataset comes from \cite{HarperKo15, WikiData} where we use the ``reviews.csv" file to get the movies reviewed by each user to count the number of movies reviewed by all users. Each user has reviewed at least 20 movies.  Results  on the MovieLens data are presented in Figure~\ref{fig:Movie} .  
\begin{figure}[H]
\centering
\includegraphics[width=0.45\textwidth]{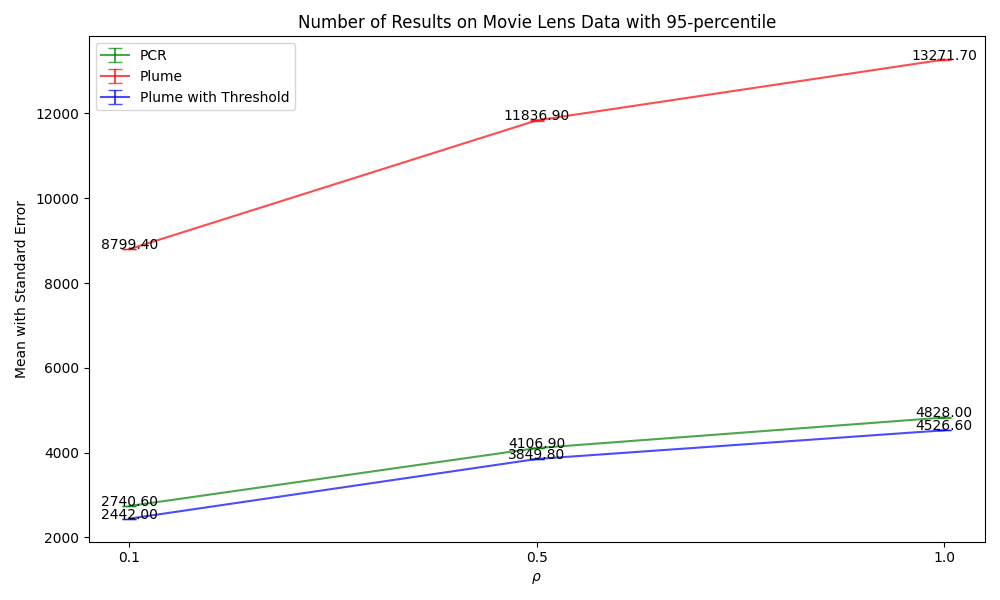}
\includegraphics[width=0.45\textwidth]{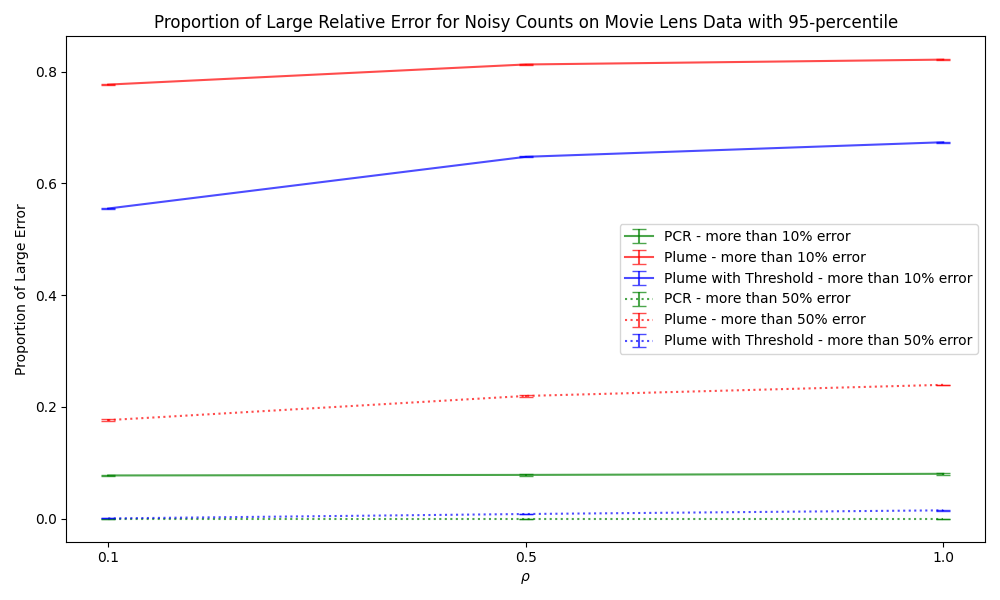}
\includegraphics[width=0.45\textwidth]{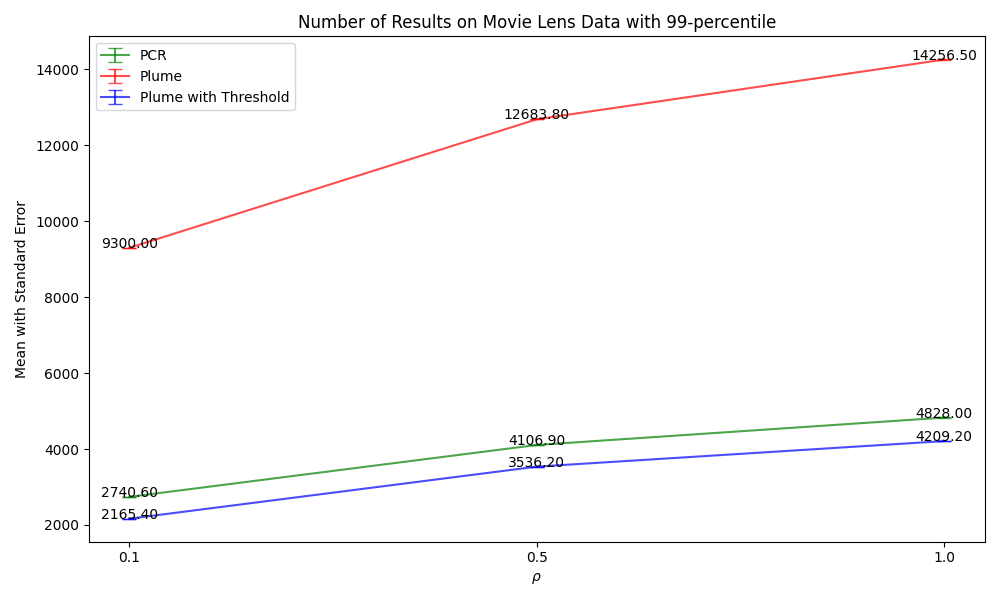}
\includegraphics[width=0.45\textwidth]{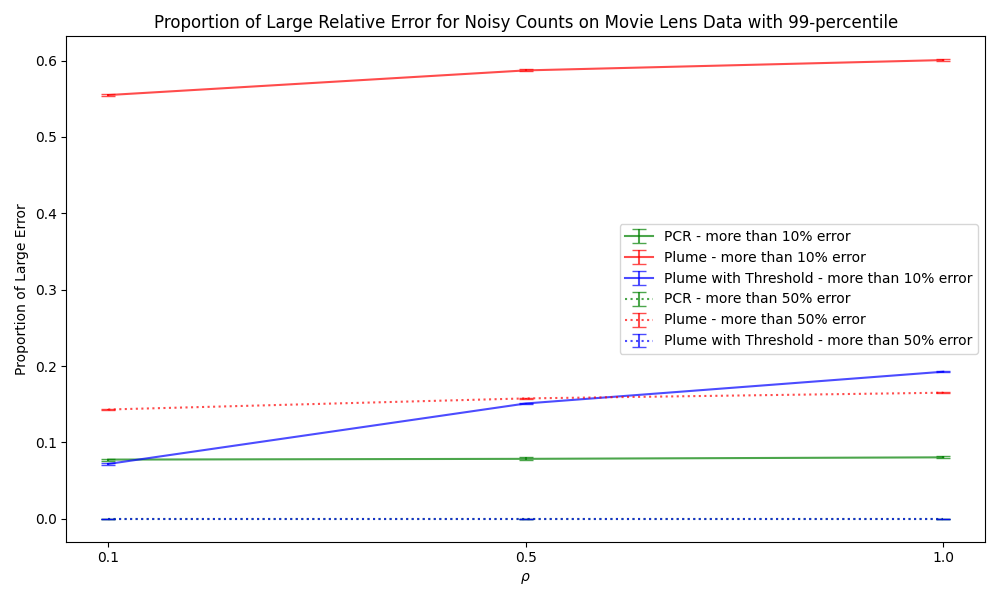}
\caption{\label{fig:Movie} Results for three approaches: PCR, Plume, and Plume with Threshold on the MovieLens data.  We show recall and precision for $\rho \in \{0.1, 0.5, 1.0 \}$ and $\delta = 10^{-6}$ averaged over 10 independent trials. The top plots use the true 95th-percentile for contribution bounding in Plume and the bottom plots use the true 99th-percentile for contribution bounding in Plume.}
\end{figure}

Note that the number of contributions per person is high in this dataset, so we see that PCR can return more results than Plume with a threshold and PCR has fewer counts with more than the target relative error when using either the true 95th or the 99th percentile.

\section{Conclusion}
We have presented an approach for releasing accurate counts on several public datasets that can outperform other approaches that rely on bounding the number of contributions each user makes to the dataset.  The accuracy of Plume crucially relies on setting the correct contribution bound, which itself should be computed with differential privacy, which has its own hyperparameters, such as allocating a certain percentage of the privacy budget for computing percentiles.  Although contribution bounding seems necessary for differential privacy, our approach (PCR) provides a way to release many accurate counts without bounding distinct contributions per user while ensuring a fixed differential privacy guarantee.   PCR can be used in a blackbox way, with little to no hyperparameter tuning, and can even provide better results than systems based on the approach in Plume especially when the number of contributions per user is high.  Such an approach is important in scenarios where direct access to the data is not allowed, even to those that are trying to implement differential privacy, and a decent baseline approach should be used.  We can simply use PCR on various real-world datasets that differ in size, as we showed in our results.  PCR would then allow practical deployments of differential privacy to scale across many different use cases, with minimal onboarding.  The setting where Plume seems to outperform PCR is when the number of distinct contributions per user is small, in which case directly using a Laplace or Gaussian mechanism over positive counts and releasing only noisy counts above a threshold \cite{KorolovaKeMiNt09, WilsonZhLaDeSiGi19} would perform better.  

There are other improvements one can make with PCR, including using noise reduction algorithms such as the Brownian mechanism \cite{WhitehouseWuRaRo22} to determine the level of noise for each count that is to be returned in cases where the true count is much larger than the threshold used in Unknown Domain Gumbel.  Furthermore, PCR can be extended to other aggregation functions beyond distinct counts by still using Unknown Domain Gumbel on distinct counts to find the elements with the most individuals and then doing contribution bounding or clipping on the aggregate function for the selected item.

\clearpage 
\bibliography{bib2}
\bibliographystyle{abbrvnat}

\end{document}